\newtheorem{theorem}{Theorem}
\def\BibTeX{{\rm B\kern-.05em{\sc i\kern-.025em b}\kern-.08em
    T\kern-.1667em\lower.7ex\hbox{E}\kern-.125emX}}
\begin{document}
\title{A Thickness Sensitive Vessel Extraction Framework for Retinal and Conjunctival Vascular Tortuosity Analysis}
\author{Ashwin De Silva, \IEEEmembership{Member, IEEE}, Malsha V. Perera, \IEEEmembership{Member, IEEE}, Navodini Wijethilake, \IEEEmembership{Member, IEEE}, Saroj Jayasinghe, Nuwan D. Nanayakkara, \IEEEmembership{Member, IEEE} and Anjula De Silva \IEEEmembership{Member, IEEE}
\thanks{A. De Silva, M. V. Perera, N. Wijethilake, N. D. Nanayakkara, A. C. De Silva are with the Department of Electronic and Telecommunication Engineering, University of Moratuwa, Sri Lanka. (email: \{ashwind, malshav, wijethilakemrn.20, nuwan, anjulads\}@uom.lk )}
\thanks{S. Jayasinghe (Professor of Medicine) is with the Department of Clinical Medicine, Faculty of Medicine, University of Colombo, Sri Lanka. (email: saroj@clinmed.cmb.ac.lk)}}

\maketitle

\begin{abstract}
Systemic diseases such as diabetes, hypertension, atherosclerosis are among the leading causes of annual human mortality rate. It is suggested that retinal and conjunctival vascular tortuosity is a potential biomarker for such systemic diseases. Most importantly, it is observed that the tortuosity depends on the thickness of these vessels. Therefore, selective calculation of tortuosity within specific vessel thicknesses is required depending on the disease being analysed. In this paper, we propose a thickness sensitive vessel extraction framework that is primarily applicable for studies related to retinal and conjunctival vascular tortuosity. The framework uses a Convolutional Neural Network based on the IterNet architecture to obtain probability maps of the entire vasculature. They are then processed by a multi-scale vessel enhancement technique that exploits both fine and coarse structural vascular details of these probability maps in order to extract vessels of specified thicknesses. We evaluated the proposed framework on four datasets including DRIVE and SBVPI, and obtained Matthew's Correlation Coefficient values greater than 0.71 for all the datasets. In addition, the proposed framework was utilized to determine the association of diabetes with retinal and conjunctival vascular tortuosity. We observed that retinal vascular tortuosity (Eccentricity based Tortuosity Index) of the diabetic group was significantly higher ($p < .05$) than that of the non-diabetic group and that conjunctival vascular tortuosity (Total Curvature normalized by Arc Length) of diabetic group was significantly lower ($p < .05$) than that of the non-diabetic group. These observations were in agreement with the literature, strengthening the suitability of the proposed framework.
\end{abstract}

\begin{IEEEkeywords}
Multi-scale Vessel Extraction, Convolutional Neural Networks, Retinal and Conjunctival Vascular Tortuosity, Diabetes
\end{IEEEkeywords}

\section{Introduction}
\label{sec:introduction}

\IEEEPARstart{I}{t} is well known that structural changes of retinal vasculature are markers of diabetes, diabetic retinopathy, nephropathy, aging, genetic disorders and cardiovascular diseases \cite{diabretinopathy,twistedbloodvessel}. Clinical observations suggest that these diseases are linked with vascular tortuosity which reflects the twisted and curved nature of blood vessels\cite{twistedbloodvessel}. In particular, retinal fundus images (Fig.\ref{intro_images} (a)) are used in visualizing  microvasculature non-invasively, which has been widely used to examine the association of vascular tortuosity with diabetes and diabetic retinopathy \cite{diabretinopathy}. In addition to diabetic related research, vascular tortuosity has been used in studies pertaining to cardiovascular diseases \cite{Owen2011RetinalAT}, sickle cell retinopathy \cite{khansari2019relationship} and central vein occlusion \cite{yasuda2015significant}.

\begin{figure}[!b]
\centerline{\includegraphics[width = \columnwidth]{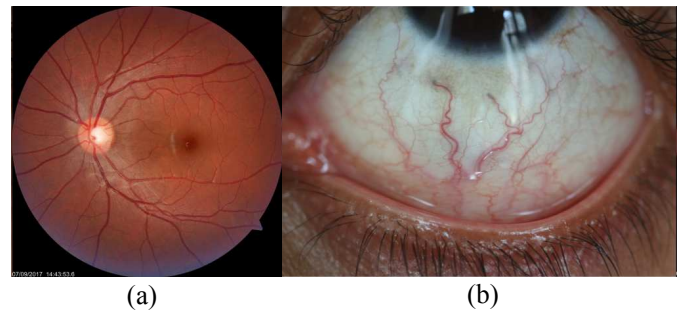}}
\caption{(a) A retinal fundus images (b) An image of the external eye}
\label{intro_images}
\end{figure}

Apart from the retina, the bulbar conjunctiva that covers the sclera of the eye is also a densely vascularized membrane which can be conveniently accessed compared to the retina. Bulbar conjunctival vessels are primarily derived from the ophthalmic artery and could be affected by systemic diseases mentioned above \cite{OpthalmicArtery}. However, to the best of our knowledge, only \cite{Achintha,scleratotuosity} have explored the relationship between diabetes and vascular tortuosity of bulbar conjunctiva. 

Despite the popularity in using retinal fundus images, the acquisition of those is expensive and requires specialized equipment. Unlike the retina, the sclera can be imaged without using expensive specialized equipment. For example, the studies conducted by Iroshan et al. \cite{Achintha} and Sodi et al. \cite{sodi2019quantitative} have used images of the external eye (Fig.\ref{intro_images} (b)) that have been acquired using a regular digital single-lens reflex (DSLR) camera, in order to visualize the bulbar conjunctival vasculature. Therefore, based on the hypothesis that bulbar conjunctival vascular tortuosity acts as a biomarker for systemic diseases, external eye images together with an accurate vessel segmentation algorithm could facilitate large scale patient screening.

Vessel segmentation is a critical step that has to be performed prior to calculating vascular tortuosity in both retinal fundus images and external eye images. Executing this task manually is highly laborious and also impractical due to the high volume of data produced by modern imaging systems. Therefore, numerous automated methods of vessel segmentation have been developed to expedite this task. Traditionally, vessel segmentation from retinal fundus images and external eye images were performed using B-COSFIRE (Bar-Selective Combination of Shifted Filter Responses) filter based algorithms \cite{BCOSFIRE} and morphological operations based algorithms \cite{Morphology}. Recently, Convolutional Neural Networks (CNNs) based methods (UNet\cite{Unet}, DUNET \cite{DUnet}, IterNet \cite{iternet}, R2U-Net \cite{r2unet}, lightweight attention UNet \cite{lightweightattention}) are getting increasingly popular in retinal vessel segmentation tasks, due to their improved performance. Contrary to retinal vessel segmentation, only a few studies have utilized CNNs \cite{ScleraVesselSeg} for segmenting conjunctival vessels. To date, studies \cite{Achintha,Dulara,Morphology} that are related to vascular tortuosity have only used traditional approaches described above. Since CNNs outperform these traditional methods in terms of vessel segmentation accuracy, it is reasonable to assume that vessels segmented using a CNN based method would yield more reliable tortuosity values.

In both retina and bulbar conjuctiva, the tortuous nature of a vessel varies with its thickness \cite{VesselCaliber,sharma2017studies}. Also depending on the study, the required vessel thicknesses could depend on the disease condition. For example, according to \cite{ramos2019computational}, the difference between diabetic and non-diabetic retinal vascular tortuosity is more pronounced in relatively thick vessels. Multi-scale vessel enhancement methods proposed by Frangi et al. \cite{frangi}, Sato et al. \cite{sato19973d} , Steger et al. \cite{steger1998} are popular image processing techniques for enhancing tubular structures such as vessels, while being sensitive towards the structural thicknesses. To the best of our knowledge, only Owen et al. \cite{Owen2011RetinalAT} have employed Steger et al.'s method \cite{steger1998} to measure the vessel thicknesses, but, it does not focus on extracting vessels based on their thicknesses. The methods mentioned above do not contain any CNNs which clearly outperform traditional methods of vessel segmentation. While CNNs has the potential to perform such a task on their own, they would require multiple datasets with explicitly annotated vessels of specified thicknesses, to train multiple models. Constructing multiple datasets and training multiple models for different thicknesses would be tedious and computationally expensive. Therefore, we speculate that a CNN trained on a single dataset with annotations of the entire vasculature paired with a multi-scale vessel enhancement method could boost the accuracy of extracting vessels of specified thicknesses, while eliminating the need to train multiple models on multiple datasets. Nevertheless, there have not been any previous work that had attempted to combine a CNN with a multi-scale vessel enhancement method.

In this work, we propose a novel framework that can extract vessels of specified thicknesses from retinal and external eye images. Such a framework would be highly beneficial for studies involving the computation of retinal and conjunctival vascular tortuosity. In order to extract retinal and conjunctival vessels, we use an IterNet followed by a multi-scale vessel enhancement method that exploits fine and coarse vascular structural details. IterNet has the state-of-the-art performance for retinal vessel segmentation at the time of this study and we hypothesize that it would also yield a better performance in conjunctival vessel segmentation. In the context of the proposed framework, IterNet is used to generate probability maps of the entire vasculature. The multi-scale vessel enhancement method that follows the IterNet, ensures that only the vessels of specified thicknesses are extracted. Therefore, our framework combines the power of CNNs and multi-scale vessel enhancement methods to automatically and accurately extract the vessels of specified thicknesses without requiring multiple datasets or training multiple models. In the case of conjunctival vessels, prior to vessel extraction, a U-Net is used to accurately segment the scleral region from the external eye images.

In addition, we applied our proposed framework to determine the association of diabetes with tortuosity of relatively thick retinal and conjunctival vessels. Our findings from this study agreed with the existing literature, thus fortifying the applicability of the proposed framework on studies related to retinal and conjunctival vascular tortuosity.

\section{METHODOLOGY}

A detailed block diagram of the proposed framework including results at intermediate steps is illustrated in Fig. \ref{framework}. In this section, we describe segmentation of the scleral region (section \ref{ScleraUNetsection} and step A in Fig. \ref{framework}), generation of vessel probability maps (section \ref{generation-of-vessel-prob-maps} and step B in Fig. \ref{framework}), extraction of vessels of specified thicknesses (section \ref{post_processing} and steps C, D, E, F, G, H, I in Fig. \ref{framework}) and calculation of tortuosity (section \ref{tor-calc} and step J in Fig. \ref{framework}). 

\subsection{Segmentation of the Scleral Region}
\label{ScleraUNetsection}

\begin{figure}[!bp]
\centerline{\includegraphics[width = \columnwidth]{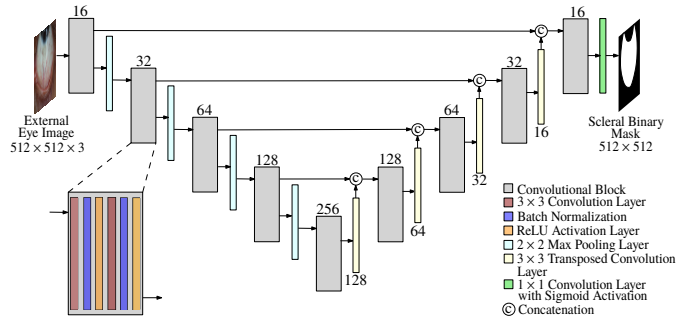}}
\caption{The network architecture of the ScleraUNet. All external eye images are reshaped to $512 \times 512 \times 3$ before feeding them to the network. The number of filters present in each convolutional layer is stated on top of the respective convolutional block. The number of filters are mentioned below each transposed convolutional layer.}
\label{ScleraUNet}
\end{figure}

Unlike retinal fundus images which have an inherent circular region of interest, a pre-processing step has to be performed on external eye images to segment the scleral region before segmenting its conjunctival vessels. Therefore, the accurate segmentation of the sclera from the external eye images is critical for accurate segmentation of conjunctival vessels. In recent years, CNNs have outperformed traditional methods in semantic image segmentation tasks. U-Net \cite{Unet} is a CNN architecture which is widely used in biomedical image segmentation. In this work, we use the U-Net architecture illustrated in Fig. \ref{ScleraUNet}, to segment the scleral region. This network is referred to as \enquote{ScleraUNet} from here onwards to avoid ambiguity.

The ScleraUNet is trained using external eye images and their corresponding annotations of scleral regions. Binary cross-entropy loss is used as the loss function together with ADAM optimizer \cite{adam} to train the network. Random rotations, shifting, flipping, zooming and shearing are used to augment the training data. Details of this training process is described in section \ref{Segmentation of Sclera}.

During inference, the network outputs a binary mask of the scleral region. This binary mask is then used to obtain an image with the isolated scleral region from the external eye image, which we refer to as the scleral image.

\subsection{Generation of Vessel Probability Maps}
\label{generation-of-vessel-prob-maps}

Before extracting the vessels of specified thicknesses from retinal or scleral images, we first generate a vessel probability map that is able to represent the entire underlying vasculature in detail. Each pixel of this map should represent the probability of that pixel belonging to a vessel of the given image. In order to generate these maps, we use the IterNet architecture (Fig. \ref{VesselNet}) proposed by Li et al. \cite{iternet}. IterNet architecture is based on U-Nets and it currently holds state-of-the-art performance in retinal vessel segmentation. Hence, we hypothesize that an appropriately tuned IterNet architecture would also yield a better segmentation of the conjunctival vessels as well.

\begin{figure*}[!bp]
\centerline{\includegraphics[width=\textwidth]{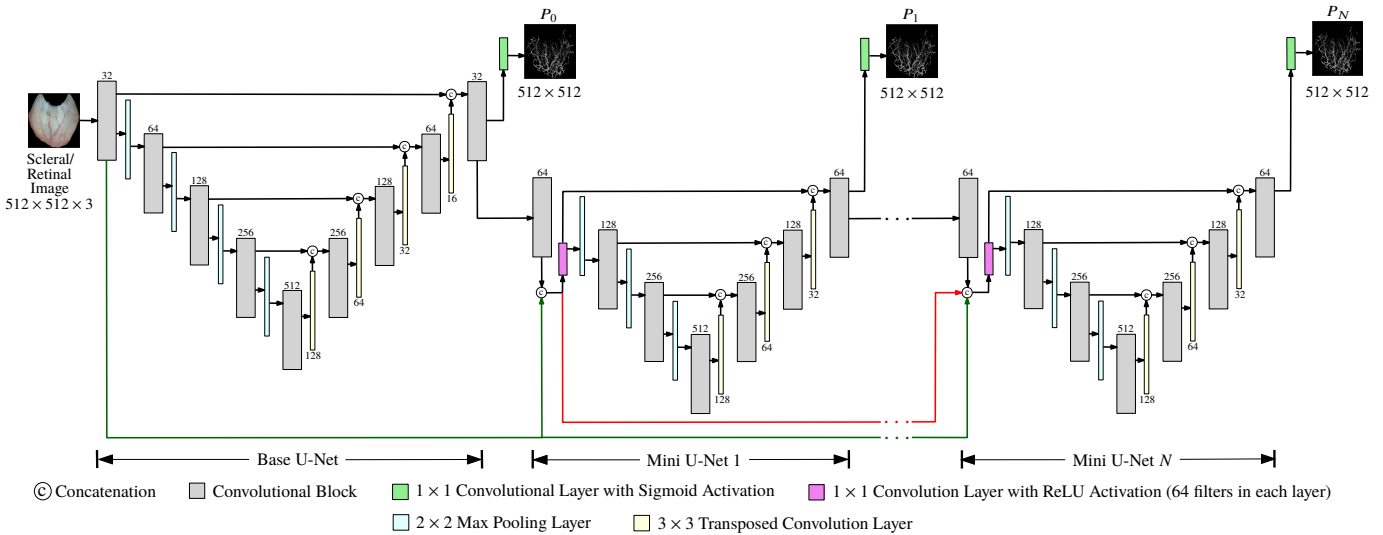}}
\caption{The IterNet architecture; base U-Net and mini U-Nets are connected to each other using a series of connections. Black arrows in the figure represent the skip connections inside each U-Net. Green arrows indicate that the output from the first convolutional block of the base U-Net is sent to each of the mini U-Nets. Red arrows represent the passing of concatenated output from first convolutional block of base U-Net and first mini U-Net, to the remaining mini U-Nets in the network. Note that the convolutional blocks here are identical to the ones that are illustrated in Fig. \ref{ScleraUNet}}.
\label{VesselNet}
\end{figure*}

IterNet contains a base U-Net followed by $N$ number of mini U-Nets. The base U-Net architecture is similar to the U-Net described in section \ref{ScleraUNetsection}. Each mini U-Net is a simplified and light weight version of the base U-Net containing only 8 convolutional blocks. Base U-Net and mini U-Nets, each outputs a probability map where each pixel represents the probability of it belonging to a vessel. The base U-Net outputs a coarse probability map ($P_0$) of the underlying vasculature of a retinal or a scleral image. The subsequent mini U-Nets act as a cascaded unit that refines $P_0$ and outputs probability maps $P_1, P_2, \dots, P_N$. Since the final mini U-Net outputs the most refined probability map, $P_N$ is able to represent the underlying vasculature in detail. Therefore, $P_N$ is considered as the final vessel probability map.

An IterNet model trained with retinal images and their corresponding vessel annotations, is used as the retinal vessel probability map generator. Similarly, an IterNet model trained with scleral images and their vessel annotations, is used as the conjunctival vessel probability map generator.

The total loss $\mathcal{L}$ of the network is computed as the weighted sum of binary cross-entropy losses $\mathcal{L}_i$ $(i = 0, 1, \dots, N)$ ,
\begin{equation}
 \mathcal{L} = \sum_{i=0}^N \theta_i \mathcal{L}_i
\end{equation}
where $\mathcal{L}_0$ is the loss of base U-Net output, $\mathcal{L}_j$ $(j = 1, 2, \dots, N)$ is the loss of $j$\textsuperscript{th} mini U-Net output and $\theta_i$ is the corresponding weight of loss $\mathcal{L}_i$. ADAM optimizer is used to train the network. Random rotations, flipping, zooming, shifting, intensity changes and contrast changes are used to augment the training data. The training process of the IterNet is described in section \ref{VesselNet_results}.

After vessel probability maps are obtained, they are processed together with the original scleral/retinal images to extract the vessels of specified thicknesses, using a multi-scale vessel enhancement method.

\subsection{Extraction of Vessels of Specified Thicknesses}
\label{post_processing}

To extract vessels of specified thicknesses, we use a series of steps that takes in the original scleral/retinal image and its vessel probability map, and returns a binary image that contains the vessels of specified thicknesses as output. For this purpose, we employ a multi-scale vessel enhancement method based on the Frangi filter \cite{frangi} which is specialized in enhancing and separating nearby elongated structures of multiple scales \cite{drechsler2010comparison}. 


\subsubsection{Frangi Filter}
\label{frangi}

The Frangi filter outputs a map with enhanced elongated structures (in this case, vessels) using the multi-scale second order local structure (Hessian) of a given image. Hessian of a grayscale image $I$ at scale $\sigma$ for a pixel $\bm{x} = [x_1, x_2]^T$; $\bm{x} \in \Omega$ where $\Omega$ is the set of all pixels in $I$, is represented by a  matrix $\mathbf{H}(\bm{x}, \sigma) = \big\{h_{ij}(\bm{x}, \sigma)\big\}_{2 \times 2}$ defined as,
\begin{equation}
    h_{ij}(\bm{x}, \sigma) = \sigma^2 I(\bm{x}) * \frac{\partial^2 G(\bm{x}, \sigma)}{\partial x_i \partial
    x_j}
\end{equation}
where $*$ represents the convolution operation and $G(\bm{x}, \sigma) = (2 \pi \sigma^2)^{-1} \exp(-\bm{x}^T \bm{x} / 2 \sigma^2)$ is the bivariate Gaussian.

Let $\lambda_{\sigma, 1}$ and $\lambda_{\sigma, 2}$ be the eigenvalues of $\mathbf{H}(\bm{x}, \sigma)$. With $\lambda_{\sigma, 1}$, $\lambda_{\sigma, 2}$, the Frangi filter response of $I$ at scale $\sigma$ for the pixel $\bm{x}$ is given by the point-wise operator $\mathcal{F}: \Omega \times \mathbb{R} \rightarrow \mathbb{R}$ defined below.

\begin{equation}
\forall \bm{x} \in \Omega : \mathcal{F}(I)(\bm{x}, \sigma) = 
\begin{cases}
    0 &  \lambda_{\sigma, 2} > 0 \\
    V(\mathbf{x}, \sigma)  &  \lambda_{\sigma, 2} \leq 0
\end{cases} 
\label{Frangi-Eqn}
\end{equation}
where, 
\begin{equation}
    V(\bm{x}, \sigma) = \exp\bigg(-\frac{R_B^2}{2\beta^2}\bigg) \bigg(1 - \exp( - \frac{S^2}{2c^2} ) \bigg)
\end{equation}

where, $R_B = |\lambda_{\sigma, 1} / \lambda_{\sigma, 2}|$ and $S^2 = \lambda_{\sigma, 1}^2 + \lambda_{\sigma, 2}^2$. $R_B$ is the blobness measure and $S$ is the second order structuredness. $\beta$ and $c$ are thresholds that control the sensitivity of line filters to the measures $R_B$ and $S$ respectively. For this study, we set $\beta$ and $c$ as: $\beta = 0.5$ and $c = 0.5 \times \displaystyle \max_{\bm{x} \in \Omega} ||\mathbf{H}(\bm{x}, \sigma)||_2$. In this paper, Frangi filter response of $I$ at scale $\sigma$ is denoted as $\mathcal{F}(I)(\sigma)$.

It is important to note that the scale $\sigma$ is related to the vessel thickness $w$ (in pixels) according to $\sigma = Cw$ (see Theorem \ref{theorem:1} of appendix). Here,
\begin{equation}
    C = \frac{1}{2} \sqrt{ 1 - 2W \bigg( \frac{1}{2} \exp \Big( \frac{1}{2} + \log(\alpha) \Big) \bigg)}^{-1}
\end{equation}
where, $W$ is the Lambert W function and $\alpha$ $(0 \leq \alpha <  1)$ is the surround size ratio of the second derivative of the Gaussian kernel. When $\alpha$ is closer to 1, the response of the Frangi filter for vessels becomes sharper. With the relationship $\sigma = Cw$, it follows that $\mathcal{F}(I)(Cw)$ provides a map in which the vessels of thickness $w$ in image $I$ are enhanced.

\subsubsection{Computing Vessel Redness Maps}
We consider the green channel of the original scleral/retinal image to compute the vessel redness map. This is because, the contrast of blood vessels are higher in the green channel of the image. In order to further enhance the vessel contrast, a Contrast Limited Adaptive Histogram Equalization (CLAHE) \cite{clahe} is applied to the green channel image. This contrast enhanced image is then inverted to represent the blood vessels as bright structures in a dark background. Let $I_g$ be this inverted image.

Now we are interested in enhancing vessels of a certain thickness $w$, in $I_g$. For this purpose we can make use of the Frangi filter described in section \ref{frangi}. Applying Frangi filter on $I_g$ together with a gamma correction ($\gamma_r$), yields the vessel redness map $V_r(w)$.
\begin{equation}
   \forall \bm{x} \in \Omega : V_r(\bm{x}, w) = \mathcal{F}(I_g)(\bm{x}, Cw)^{\gamma_r}
\end{equation}
$V_r(w)$ has a stronger emphasis on the degree of vessel redness, but it only provides a coarser enhancement (Fig. \ref{framework} (f)) for vessels of a specified thickness and often unable to preserve fine structural details of vessels. Therefore, we compute a separate vessel map that would have a strong emphasis on the vessel structure.

\subsubsection{Computing Vessel Structural Maps}

Vessel structural map is derived using the final mini U-Net output $P_N$ of the IterNet due to its detailed representation of underlying vasculature. Applying Frangi filter on $P_N$ together with a gamma correction ($\gamma_s$), yields desired vessel structural map $V_s(w)$.
\begin{equation}
    \forall \bm{x} \in \Omega : V_s(\bm{x},w) =  \mathcal{F}(P_N)(\bm{x}, Cw)^{\gamma_s}
\end{equation}
$V_s(w)$ inherits fine structural details of vessels in $P_N$ to provide a finer enhancement (Fig. \ref{framework} (e)) to the vessels of specified thickness $w$.

\subsubsection{Combining Vessel Structural and Redness Maps}

Coarser vessel redness map and finer vessel structural map are combined to yield a combined vessel map $V_c(w)$ as follows. 
\begin{equation}
    \forall \bm{x} \in \Omega : V_c(\bm{x}, w) = \big(V_r(\bm{x}, w) \times V_s(\bm{x}, w)\big)^{\gamma_c}
\end{equation}
where $\gamma_c$ stands for gamma correction. The multiplication ensures that only vessels of specified thickness $w$ in both vessel structural and redness maps would have a higher response. This way, $V_c(w)$ provides a map where vessels of thickness $w$ are enhanced by combining both coarse and fine structural details of vessels. A combined vessel map at a selected thickness $w$ is illustrated in Fig. \ref{framework} (g).

In order to arrive at a map that includes enhanced vessels of specified thicknesses $\mathcal{W} = \{w_i \mid i = 1, 2, 3 \dots, |\mathcal{W}|\}$, we obtain $V_c(w_i)$ for each $w_i \in \mathcal{W}$ and compute $\overline{V_c}(\bm{x}, \mathcal{W})$ as follows.
\begin{equation}
    \forall \bm{x} \in \Omega : \overline{V_c}(\bm{x}, \mathcal{W}) = \max_{w_i \in \mathcal{W}} V_c(\bm{x}, w_i)
    \label{mean-vessel-map-eqn}
\end{equation}
where $\overline{V_c}(\mathcal{W})$ is a map with enhanced vessels of thicknesses $\mathcal{W}$. We then binarize $\overline{V_c}(\mathcal{W})$, with a global histogram threshold computed using Otsu's method \cite{jianzhuang1991automatic}. An example of $\overline{V_c}(\mathcal{W})$ and its binarized version are illustrated in Fig. \ref{framework} (h) and Fig. \ref{framework} (i) respectively.

\subsubsection{Post-processing}
\label{extracting-vessels}

The binarized $\overline{V_c}(\mathcal{W})$ contains small white spots, holes and noisy fragments in addition to the desired vessels of interest. Morphological operations are used to remove these white spots and fill the holes in the binarized $\overline{V_c}(\mathcal{W})$. Then, to remove noisy fragments, we use a connected component based denoising method described below.

First, a set $\mathcal{K}$ containing connected components of the binarized $\overline{V_c}(\mathcal{W})$ is found using an iterative flood-fill algorithm. Then we compute normalized difference ($d_k$) between $k$\textsuperscript{th} component size ($N_k$) and median component size ($M$) as follows.
\begin{equation}
    d_k = \frac{N_k - M}{\displaystyle \max_{k \in \mathcal{K}} |N_k - M|}
\end{equation}
Higher the $d_k$, larger the size of $k$\textsuperscript{th} component. A threshold $t$ was imposed to select the largest connected components (in this case, vessels of interest) based on $d_k$ values. With this selection, we can remove noisy fragments and retain a binary map $V(\mathcal{W})$ that contains vessels of specified thicknesses.

Another form of noise, inherent to the Frangi filter response, arises when there exist vessels in retinal/scleral images with thicknesses greater than the specified thicknesses $\mathcal{W}$. Let $\mathcal{W} = \{w_n, w_{n+1}, \dots, w_{m}\}$ and $w_l$ be the highest vessel thickness in the given retinal/scleral image. If $w_m < w_l$, traces from vessels of thicknesses $\mathcal{U} = \{w_{m+1}, w_{m+2}, \dots, w_{l}\}$ appear in $V(\mathcal{W})$. These traces in $V(\mathcal{W})$ are removed using the following Boolean operation.
\begin{equation}
    \forall \bm{x} \in \Omega: V(\bm{x}, \mathcal{W}) \gets V(\bm{x},\mathcal{W}) \wedge (\sim V(\bm{x},\mathcal{U}))
\end{equation}
\begin{figure*}[!pt]
\centerline{\includegraphics[width=\textwidth]{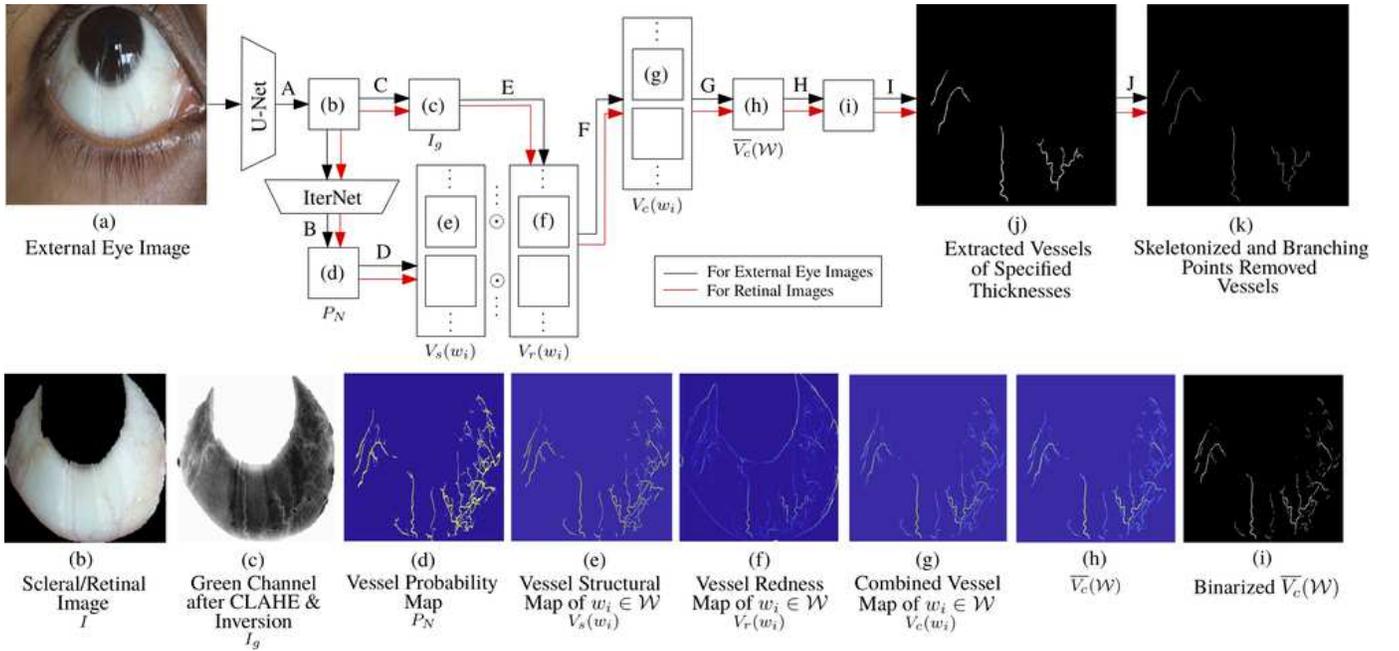}}
\caption{Block diagram of the proposed framework together with a selected external eye image to illustrate each step. The steps of the framework shown above each arrow are as follows: \textbf{A} - Segmenting the scleral region (This step is not applicable for retinal images), \textbf{B} -  Vessel probability map generation, \textbf{C} -  Taking green channel of the scleral/retinal image and applying CLAHE followed by inversion, \textbf{D} -  Computing vessel structural maps for $\forall w_i \in W$ using the vessel probability map, \textbf{E} -  Computing vessel redness maps for $\forall w_i \in W$ using $I_g$, \textbf{F} - Combining vessel structural and redness maps to compute combined vessel maps for $\forall w_i \in W$, \textbf{G} - Obtaining $\overline{V_c}(\mathcal{W})$, \textbf{H} - Binarizing $\overline{V_c}(\mathcal{W})$ with Otsu's method, \textbf{I} - Selecting the largest connected components and noise removal, \textbf{J} - Skeletonizing the vessels followed by branching point removal.}
\label{framework}
\end{figure*}

\subsection{Tortuosity Calculation}
\label{tor-calc}

After extracting vessels from retinal/scleral images, first we skeletonize these vessels to reduce their thickness down to a single pixel. Next, branching points of these skeletonized vessels are detected and removed. In this context, a branching point is defined as a pixel surrounded by more than two 8-connected pixels. With the removal of branching points, we obtain an image where all the sub vessels are separated from each other. Then the tortuosity values of these sub vessels are computed using all the tortuosity indices described in \cite{Dulara} and \cite{tortuosityReview} .  

\section{Experiments and Results}

\subsection{Dataset Description}
\label{datasets}

\subsubsection{SBVPI Dataset}  
Sclera Blood Vessels, Periocular and Iris (SBVPI) is a publicly available dataset primarily intended for sclera and periocular recognition research \cite{SBVPI1,SBVPI2}. It consists of 1840 external eye images ($3000 \times 1700$) in 4 different gaze directions (straight, left, right and up) belonging to 55 healthy subjects. Out of these, the entire conjunctival vasculature is annotated in 128 images and the scleral region is annotated in 1000 images.


\subsubsection{DRIVE Dataset}
Digital Retinal Images for Vessel Extraction (DRIVE) \cite{DRIVE} is a publicly available dataset with 40 retinal images ($565 \times 584$) belonging to 40 subjects (33 healthy subjects and 7 subjects with mild early diabetic retinopathy) along with their corresponding annotations of the entire retinal vasculature. 


\subsubsection{REIDA Dataset}
Retinal and Eye Images for Diabetic Analysis (REIDA) dataset is comprised of 58 retinal fundus images ($3608 \times 3608$) and 60 external eye images ($5184 \times 3456$) that are collected from 32 volunteer subjects of ages 40-70 years, from the National Diabetes Center, Colombo, Sri Lanka. Out of them, 14 are diabetic and 18 are non-diabetic. Subjects with fasting blood glucose values of more than $110 \text{ mgdL}^{-1}$ are categorized as diabetic and the rest as non-diabetic. Retinal fundus images from both eyes of each subject are captured using ZEISS VISUCAM 524 retinal camera. External eye images containing superior and inferior bulbar conjunctiva from both eyes of each subject are captured using a Canon digital single-lens reflex (DSLR) camera with a 100 mm macro lens with no special lighting conditions. We shall denote the retinal image collection as REIDA Retina (REIDA-R) dataset and the external eye image collection as REIDA External Eye (REIDA-EE) dataset. The acquisition procedure of REIDA dataset has been approved by the Ethics Review Committee, Faculty of Medicine, University of Colombo, Sri Lanka (Reference: EC-17-132).  

The REIDA-EE dataset contains scleral annotations of 40 external eye images. In addition, we manually annotated vessels of specified thicknesses in 8 images from each SBVPI, DRIVE, REIDA-R and REIDA-EE datasets. These were used as the test sets to evaluate the performance of extracting vessels of specified thicknesses. Details of these vessel annotations are given in Table \ref{hyperparams}. All of these annotations were verified by an expert.

We evaluated the performance of our proposed framework on the aforementioned datasets. The proposed framework was evaluated in terms of scleral segmentation, vessel probability map generation and extraction of vessels of specified thicknesses. In addition, we applied our framework on REIDA dataset to determine the association of retinal and conjunctival vascular tortuosity with diabetes.

\subsection{Segmentation of Sclera}
\label{Segmentation of Sclera}

The ScleraUnet was evaluated on SBVPI and REIDA-EE where separate models were trained for each dataset. For training, validating and testing the model for SBVPI, 1229, 388 and 223 images were used respectively, whereas for REIDA-EE, 25, 5 and 10 images were used respectively. Both of these models were trained at a learning rate of 0.0001 on an NVIDIA Tesla K80 GPU.

To evaluate the performance of the ScleraUNet, we computed the accuracy score ($Acc$) and Dice Similarity Coefficient ($DSC$). These metrics are defined as follows; $Acc = (TP + TN) / (TP + TN + FP + FN)$ and $DSC = (2 \times TP)/(2 \times TP + FP + FN)$. Here $TP, TN, FP$ and $FN$ indicate true positives, true negatives, false positives and false negatives respectively. For SBVPI, we obtained  $Acc = 0.9868$ and $DSC = 0.9868$, and for REIDA-EE, we obtained $Acc = 0.9701$ and $DSC = 0.9697$.

\begin{figure}[!hb]
\centerline{\includegraphics[width=\columnwidth]{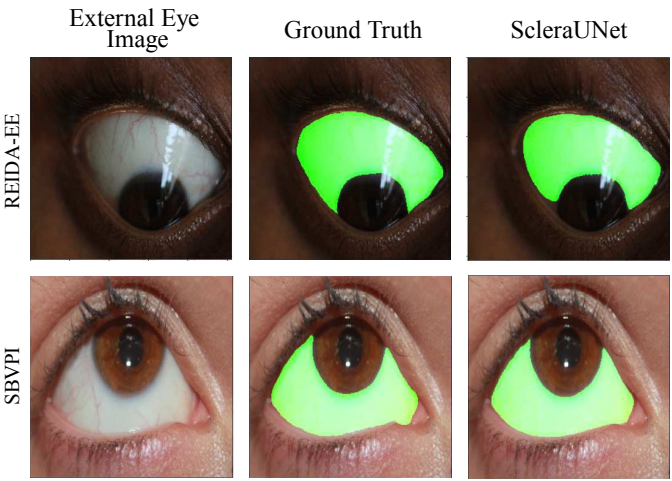}}
\caption{Scleral segmentation results for REIDA-EE (ACC = 0.9779, MCC = 0.9777) and SBVPI (ACC = 0.9850, DSC = 0.9849) datasets}
\label{sclera-seg-results}
\end{figure}

The above accuracy scores and the Dice similarity scores indicate that the ScleraUNet provides sufficient performance for this task. A selected external eye image from each SBVPI and REIDA-EE, together with their scleral ground truth and scleral segmentations obtained from ScleraUNet are shown in Fig. \ref{sclera-seg-results}.

\subsection{Vessel Probability Map Generation}
\label{VesselNet_results}

We configured IterNet to output retinal vessel probability maps as follows; the number of mini U-Nets of the architecture was set to $N=1$ with loss weights $\theta_0 = 1$ and $\theta_1 = 0.2$. We have denoted this IterNet architecture as R-IterNet (R - representing retina). To train R-IterNet, we used retinal images and their corresponding annotated vessels from the DRIVE dataset. The network was trained using 29 images, validated on 3 images and tested with 5 retinal images. 

To output conjunctival vessel probability maps, we configured IterNet as follows; the number of mini U-Nets of the network architecture was set to $N=2$ with loss weights $\theta_0 = \theta_1 = \theta_2 = 1$. We have denoted this IterNet architecture as C-IterNet (C - representing conjunctiva). This network was trained on scleral images and their corresponding vessel annotations from the SBVPI dataset. The network was trained using 108 images, validated and tested on 10 images each.

The number of mini U-Nets and the loss weights of both R-IterNet and C-IterNet were determined using a grid search for combinations of $N = \{0, 1, \dots, 4\}$ and $\theta_0 = \theta_1 = \theta_2 = \{0, 0.1, 0.2, \dots, 1\}$. Both networks were trained at a learning rate of 0.0001 on an NVIDIA Tesla K80 GPU.

During evaluation, generated probability maps were binarized using a threshold of 0.5, before computing the accuracy. The trained R-IterNet and C-IterNet models yielded testing accuracies of 96.74\% and 95.65\% respectively, on their corresponding test sets. 

For this study, we did not train or evaluate R-IterNet and C-IterNet using REIDA datasets, due to the unavailability of annotations of the entire vasculature. Based on the hypothesis that retinal and scleral images from REIDA dataset were characteristically similar to the images from DRIVE and SBVPI datasets respectively, we used R-IterNet (trained on DRIVE) and C-IterNet (trained on SBVPI) models to generate vessel probability maps for REIDA images. Vessel probability maps of selected images from each dataset are shown in Fig. \ref{montage}.

\begin{figure*}[!bp]
\centerline{\includegraphics[width=\textwidth]{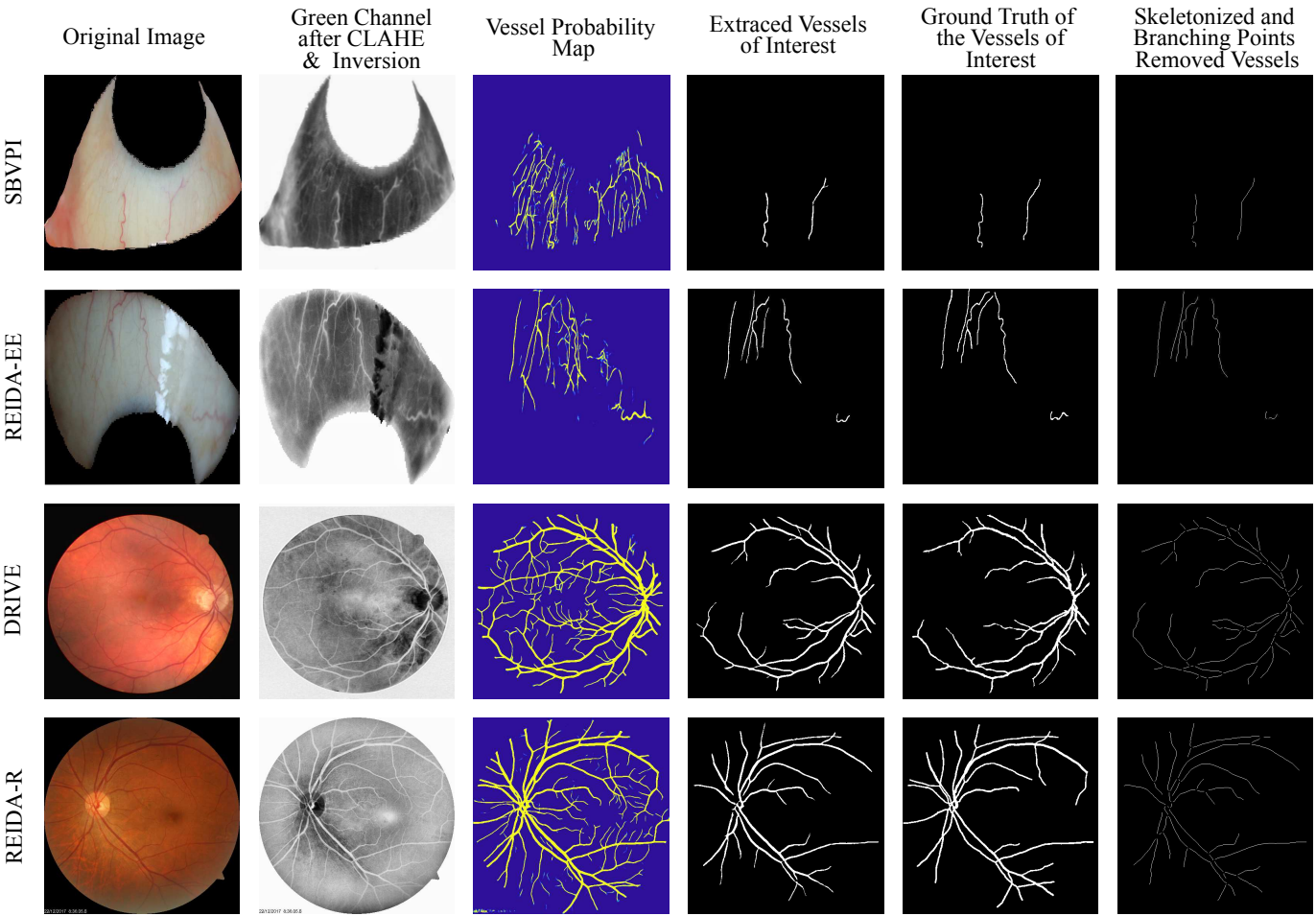}}
\caption{Resultant of selected images from each dataset at different steps of the proposed framework for SBVPI (MCC = 0.8060), REIDA-EE (MCC = 0.7432), DRIVE (MCC = 0.8147) and REIDA-R (MCC = 0.7481) datasets. Note that we have not preserved the original aspect ratios of these images for the purpose of illustration.}
\label{montage}
\end{figure*}

\subsection{Extraction of Vessels of Specified Thicknesses}

To evaluate the performance of extracting vessels of specified thicknesses, we used 8 images from each dataset (DRIVE, SBVPI, REIDA-R and REIDA-EE) together with their corresponding annotations as mentioned in section \ref{datasets}. Thicknesses ($\mathcal{W}$) of the annotations in each dataset and hyperparameters ($\gamma_r, \gamma_s, \gamma_c$ and $t$) used for extracting these vessels are tabulated in Table \ref{hyperparams}. Note that these thickness values are given for retinal/scleral images that are resized to $512 \times 512$ while preserving the aspect ratio in order to avoid distortions in vessel structure. Values for all hyperparameters except for $\mathcal{W}$ were set using a grid search for combinations of $\gamma_r = \gamma_s = \gamma_c = \{0, 0.1, \dots, 1\}$ and $t = \{0.01, 0.02, \dots, 0.50\}$. As described in section \ref{frangi}, since a value closer to 1 would result in a sharper response, $\alpha$ was empirically set to 0.9. $\mathcal{W}$ was set based on the specified thicknesses (in pixels) of vessels. For example, to extract vessels of thicknesses ranging from 4 to 8 pixels, $\mathcal{W}$ was set to $\{4, 5, 6, 7, 8\}$. In Fig. \ref{montage}, we illustrate selected resultant images from each dataset at given intermediate steps of the proposed framework.

\begin{table}[t]
\caption{Hyperparameters used for extracting vessels of specified thicknesses in each dataset}
\footnotesize
\begin{center}
\begin{tabular*}{\columnwidth}{l|c|c|c|c}
\hline \hline
Dataset & S.T$^{*}$ (in pixels) & $\mathcal{W}$ & $\gamma_r$, $\gamma_s$, $\gamma_c$ & $t$\\
\hline
DRIVE & 7 to 12 & \{7, 8, \dots , 12\} & 0.4, 0.7, 0.8 & 0.05 \\ 
SBVPI & 4 to 8 & \{4, 5, 6, 7, 8\} & 0.7, 0.1, 0.9 & 0.3 \\
REIDA-R &  7 to 12 & \{7, 8, \dots , 12\} & 0.9, 0.4, 0.5 & 0.05 \\
REIDA-EE & 4 to 8 & \{4, 5, 6, 7, 8\} & 0.7, 0.7, 0.7 & 0.2 \\\hline \hline
\end{tabular*}
\vspace{2pt} \\

{\raggedright $^{*}$S.T stands for specified thicknesses of vessels that have to be extracted. \par}
\label{hyperparams}
\end{center}
\end{table}

The extracted vessels consist of only a smaller number of foreground pixels compared to background pixels, therefore, resulting in a significant class imbalance. Hence, apart from the accuracy, we use Matthew’s Correlation Coefficient ($MCC$) given in \eqref{eq5} which is widely used as an appropriate evaluation metric for class imbalanced binary classification problems \cite{chicco2020advantages}.
\begin{equation}
MCC = \frac{TP/N - (S\times P)}{\sqrt{P \times S \times (1- S) \times (1- P)}}\label{eq5}
\end{equation}
Where $S = (TP + FN)/N$, $P = (TP + FP)/N$ and $N = TP + FP + TN + FN$.
In Table \ref{prom_vessel_results}, we report the performance metrics obtained for extracting vessels of specified thicknesses with respect to each dataset. In the same table, we also present results for the following cases of computing $\overline{V_c}(\mathcal{W})$: 

\begin{enumerate}[label=(\roman*)]
    \item Using only vessel redness maps ($V_r$) :\\ $\overline{V_c}(\bm{x},\mathcal{W}) = \displaystyle \max_{w_i \in \mathcal{W}} V_r(\bm{x}, w_i)$
    \item Using only vessel structural maps ($V_s$) :\\ $\overline{V_c}(\bm{x},\mathcal{W})  = \displaystyle \max_{w_i \in \mathcal{W}} V_s(\bm{x}, w_i)$
    \item  Using the combined vessel maps ($V_c$) :\\ $\overline{V_c}(\bm{x},\mathcal{W}) = \displaystyle \max_{w_i \in \mathcal{W}} V_c(\bm{x}, w_i)$
\end{enumerate}







\begin{table}[!h]
\caption{Performance Evaluation of Extracting Vessels of Specified Thicknesses}
\small
\begin{center}
\begin{tabular*}{\columnwidth}{p{48pt}|>{\centering\arraybackslash}p{23pt}|>{\centering\arraybackslash}p{65pt}|>{\centering\arraybackslash}p{65pt}}
\hline
\hline
Dataset & Cases & \textbf{$Acc$} &\textbf{$MCC$} \\

\hline
      & $V_r$ & 0.9828 $\pm$ 0.0059 &  0.3925 $\pm$ 0.1836 \\
 SBVPI& $V_s$ & 0.9869 $\pm$ 0.0065&  0.4910 $\pm$ 0.1506 \\
      & $V_c$ & \textbf{0.9969 $\pm$ 0.0015}  &  \textbf{0.7335 $\pm$ 0.0536}  \\

\hline 
          & $V_r$ & 0.9786 $\pm$ 0.0037 & 0.4370 $\pm$ 0.0925 \\
REIDA-EE & $V_s$ &0.9901 $\pm$ 0.0043 & 0.5944 $\pm$ 0.0869 \\
          & $V_c$ & \textbf{0.9938 $\pm$ 0.0023}  & \textbf{0.7265 $\pm$ 0.0204}  \\

\hline
      & $V_r$ & 0.9127 $\pm$ 0.0094 & 0.5441 $\pm$ 0.0317  \\
DRIVE & $V_s$ & 0.9685 $\pm$ 0.0041  & 0.7547 $\pm$ 0.0315 \\
      & $V_c$ & \textbf{0.9763 $\pm$ 0.0016}  & \textbf{0.7953 $\pm$ 0.0104}  \\

\hline
             & $V_r$ & 0.9141 $\pm$ 0.0061& 0.5388 $\pm$ 0.0274  \\
REIDA-R & $V_s$ & 0.9583 $\pm$ 0.0066 & 0.6796 $\pm$ 0.0450 \\
             & $V_c$ & \textbf{0.9673 $\pm$ 0.0037} & \textbf{0.7167 $\pm$ 0.0223} \\

\hline
\hline
\end{tabular*}
\vspace{2pt} \\
\label{prom_vessel_results}
\end{center}
\end{table}

Out of above three cases, it was observed that the highest accuracy score and $MCC$ values were obtained when the combined vessel maps were utilized (case (iii)) to compute $\overline{V_c}(\mathcal{W})$, rather than using only vessel structural maps (case (ii)) or only vessel redness maps (case (i)) individually. Therefore, the combination of vessel structural and vessel redness maps given by \eqref{mean-vessel-map-eqn} is better suited for accurate extraction of vessels of specified thicknesses.

Moreover, we provide qualitative results in Fig. \ref{zoomed-images} which illustrates several closeup visualizations of extracted vessels belonging to different sets of thicknesses other than those mentioned in Table \ref{hyperparams}.

\begin{figure}[!tp]
\centerline{\includegraphics[width=\columnwidth]{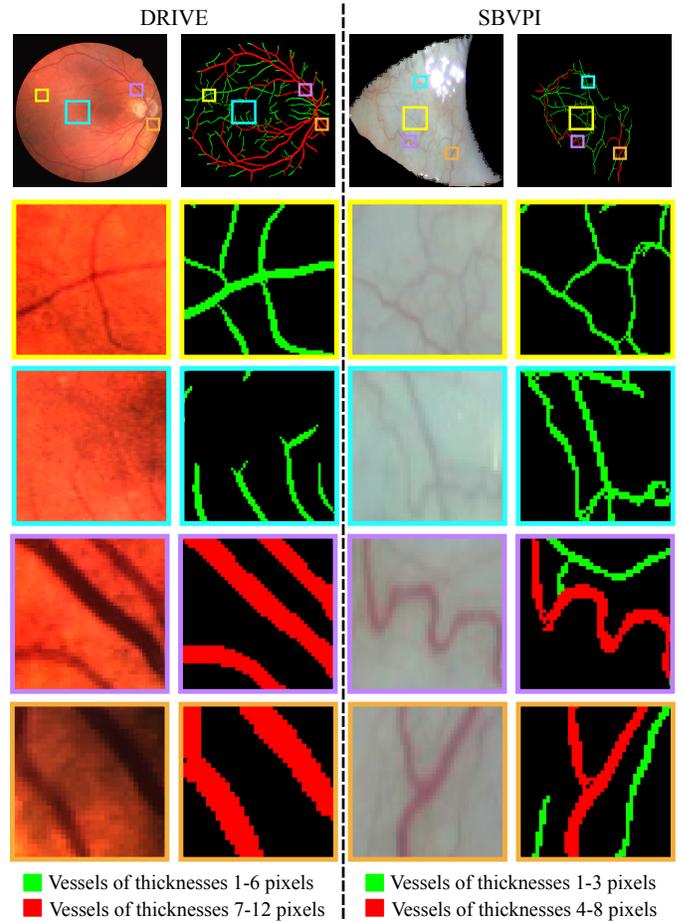}}
\caption{Closeup visualizations of extracted vessels of specified thicknesses from DRIVE and SBVPI. We extracted retinal vessels of thicknesses 1-6 pixels ($\gamma_r = 0, \gamma_s = 0.7, \gamma_c = 0.8, t = 0.05$) and 7-12 pixles ($\gamma_r = 0.4, \gamma_s = 0.7, \gamma_c = 0.8, t = 0.05$) from the DRIVE dataset. Similarly, conjunctival vessels of 1-3 pixels ($\gamma_r = 0, \gamma_s = 0.1, \gamma_c = 0.9, t = 0.3$) and 4-8 pixels ($\gamma_r = 0.7, \gamma_s = 0.1, \gamma_c = 0.9, t = 0.3$) were extracted from the SBVPI dataset. }
\label{zoomed-images}
\end{figure}

\subsection{Determination of the Association of Retinal and Conjunctival Vascular Tortuosity with Diabetes}
\label{diabetic-study}

As a potential application, we used our proposed method on REIDA-R and REIDA-EE datasets to extract vessels of specified thicknesses given in Table \ref{hyperparams}, and attempted to determine the association of retinal and conjunctival vascular tortuosity with diabetes.

The vascular tortuosity values based on a total of 12 tortuosity indices described in \cite{tortuosityReview,Dulara} were computed for retinal and conjunctival vessels of both diabetic and non-diabetic subjects. When a subject was having more than one retinal/external eye image, the mean tortuosity value of those was attributed to that particular subject.

Since sample sizes of diabetic ($n = 14$) and non-diabetic ($n = 18$) groups were not equal and the tortuosity values of the two groups were not distributed normally, we used the unpaired Mann-Whitney test to compare vascular tortuosity values between diabetic and non-diabetic groups. 

This test yielded that retinal vascular tortuosity calculated with Eccentricity based Tortuosity Index (ETI) \cite{Dulara} of the diabetic group (median = $0.0167$) was significantly higher ($U = 74, p = .0247$) than that of the non-diabetic group (median = $0.0150$), and conjunctival vascular tortuosity calculated with Total Curvature normalised by Arc Length (TCAL) \cite{TCAL}  of the diabetic group (median = $0.0772$) is significantly lower ($U = 80, p = .0420$) than that of the non-diabetic group (median = $0.0803$). Fig. \ref{tor_figures} (a) and (b) illustrate the comparison between two groups in terms of ETI and TCAL values for retinal and conjunctival vessels respectively. It was also observed that there are no significant differences between diabetic and non-diabetic groups in terms of the other considered tortuosity indices for both retinal and conjunctival vessels. 

\begin{figure}[!h]
\centerline{\includegraphics[width =1\columnwidth]{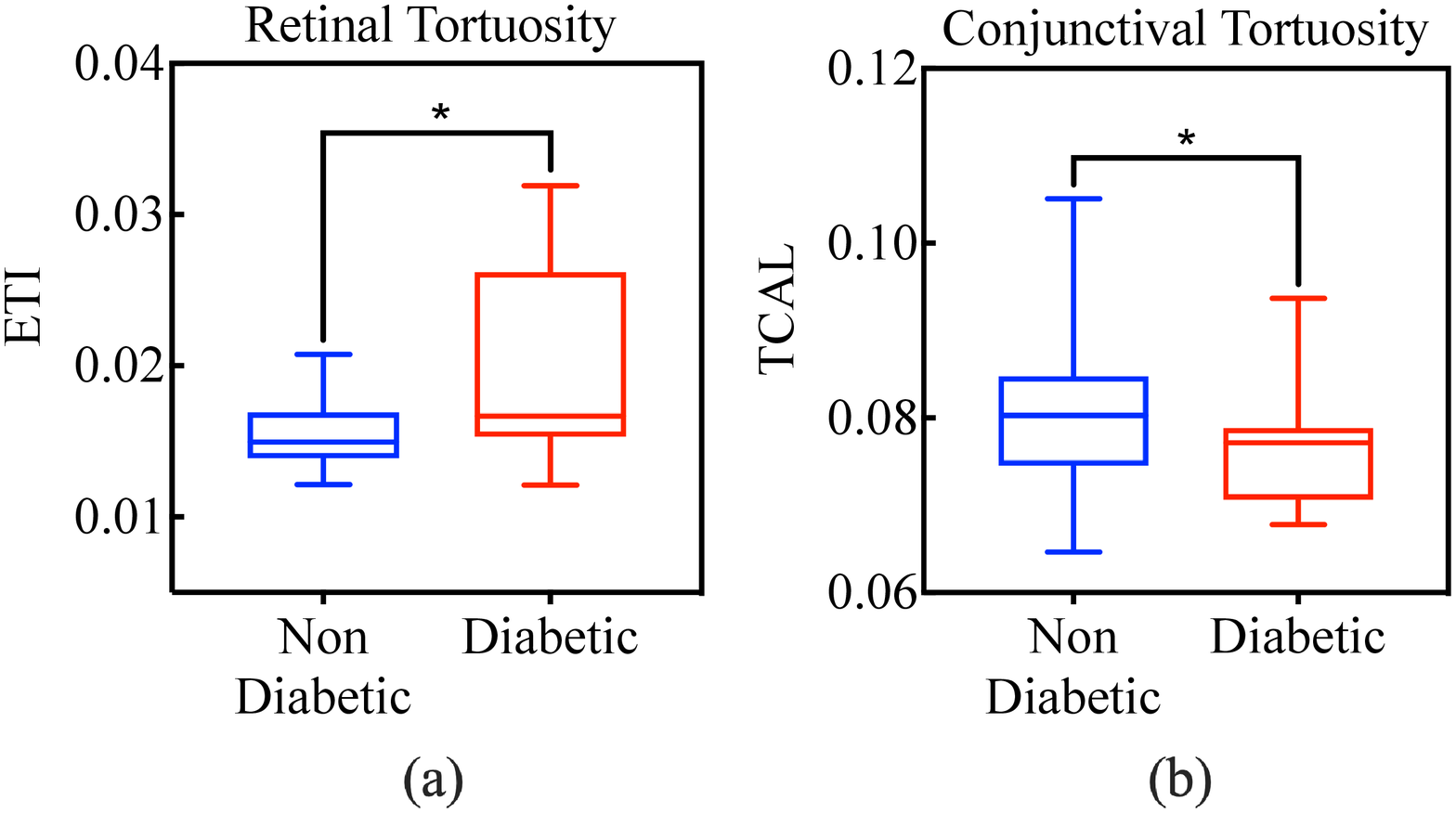}}
\caption{The comparison between diabetic and non-diabetic groups with respect to (a) retinal vascular tortuosity in terms of ETI. (b)  conjunctival vascular tortuosity in terms of TCAL. \enquote{$*$} represents the statistical significance of the difference between two groups as measured by Mann-Whitney test. The horizontal bar in each box plot denotes the median of the respective group. }
\label{tor_figures}
\end{figure}


In addition to comparing the tortuosity values, we also analyzed the relationship between tortuosities of retinal vessels and conjunctival vessels. The correlation between retinal and conjunctival vascular tortuosity values for each tortuosity index was computed separately for non-diabetic and diabetic groups using Spearman's rank correlation. However there were no statistically significant correlation $(p > .05)$ between retinal and conjunctival vascular tortuosity values measured in terms of any of the considered tortuosity indices in both non-diabetic and diabetic groups.

\section{Discussion}

In this study, the proposed framework included an IterNet based CNN to obtain probability maps of the entire retinal/conjunctival vasculature which were then subjected to a series of post-processing steps based on a multi-scale vessel enhancement method, that exploits both fine and coarse structural vascular details of these probability maps in order to extract vessels of specified thicknesses. This way, we could incorporate the power of CNNs into the framework without requiring multiple datasets with explicitly annotated vessels of different thicknesses and training multiple models. The CNNs of this framework were only needed to be trained once, while the vessel extraction steps could be conveniently configured to suit the nature of the vessels of interest.

The proposed framework achieved $MCC$ values of $0.7350, 0.7953, 0.7167$ and $0.7265$ for SBVPI, DRIVE, REIDA-R and REIDA-EE respectively in extracting the vessels of the specified thicknesses given in Table \ref{hyperparams}. This suggests that the framework attains sufficient vessel extraction performance for the selected sets of thicknesses. Since no previous study has introduced a similar framework that focuses on extracting retinal and conjunctival vessels of specified thicknesses, we did not compare the performance of the proposed framework with a previous study. 

In the framework, we employed a U-Net (referred to as ScleraUNet) to perform the segmentation of the scleral region of the external eye images. Accurate scleral region segmentation positively contributes to the overall conjunctival vessel extraction performance of the framework. As evident from section \ref{Segmentation of Sclera}, the ScleraUNet attained nearly perfect segmentation accuracy values for SBVPI ($Acc = 0.9868$) and REIDA-EE ( $Acc = 0.9701$), thus demonstrating its suitability for this task.

Within the proposed framework, we used an IterNet architecture to generate the vessel probability maps for both retinal and conjunctival vessels. Since IterNet currently holds the state-of-the–art performance in retinal vessel segmentation at the time of this study, we hypothesized that a similar appropriately trained network can be employed to generate accurate conjunctival vessel probability maps. The testing segmentation accuracy of $95.65\%$ achieved by the C-IterNet for SBVPI dataset justified the validity of this hypothesis. A major advantage of the proposed framework is the fact that an IterNet has to be trained only once on a single dataset with annotations of the complete vasculature, rather than training it on multiple datasets with explicit vessel annotations of different thicknesses. Thereafter, the vessels of interest are extracted from the vessel probability maps generated by the trained IterNet, using a multi-scale vessel enhancement method that does not require any prior training.

The vessel extraction method of the proposed framework ensures that the vessels of desired sets of thicknesses are extracted with the combined vessel maps ($V_c$) computed using vessel structural maps ($V_s$) and vessel redness maps ($V_r$). Based on results in the Table \ref{prom_vessel_results}, the highest $MCC$ values were obtained for both retinal and conjunctival vessels, when we used these $V_c$ in the framework instead of using $V_s$ or $V_r$ individually. In all the datasets considered in the evaluation, the vessels of interest had pronounced shades of red. Therefore, the combination of vessel redness maps with vessel structural maps provided a better enhancement of vessels of specified thicknesses. However, in an application where vessel redness is less pronounced (for example, when trying to enhance thinner or faded vessels), $\gamma_r$ should be set to a value closer to zero, in order to obtain a better vessel enhancement, as illustrated in Fig. \ref{zoomed-images}. However, due to the unavailability of manual annotations we did not perform a quantitative evaluation for extracting vessel thicknesses lesser than the thicknesses specified in Table \ref{hyperparams}. 

As a potential application, we used the proposed framework to determine the association of retinal and conjunctival vascular tortuosity with diabetes. From the Mann-Whitney tests performed on the retinal tortuosity values as described in section \ref{diabetic-study}, the tortuosity index ETI, showed a significantly higher value  ($p < .05$) for diabetic subjects. This is consistent with the previous study by Ramos et al. \cite{ramos2019computational} in which the tortuosities were calculated with thick vessels. Moreover, the results from our study showed that conjunctival vascular tortuosity calculated with the TCAL index is significantly lower ($p < .05$) in diabetic subjects than that of non-diabetic subjects. This result also agrees with the findings of a study conducted by Owen et al. \cite{scleratotuosity} which observed that the tortuosity of conjunctival macro-vessels are lower in diabetic subjects. 

There were no significant correlation between retinal and conjunctival vascular tortuosities for neither diabetic nor non-diabetic subjects. We speculate that, even though this result indicates that there is no apparent relationship between retinal and conjunctival vascular tortuosities, the lack of representation power of the existing tortuosity indices may have affected the outcome of our correlation study. This may be due to the fact that most of the widely used tortuosity indices in the literature do not take into consideration, the differences between the structural properties pertaining to these retinal and conjunctival vessels. Thus, existing indices might not be suitable for studies that aim to determine the relationship between retinal and conjunctival vascular tortuosities. Hence, a novel tortuosity index that takes into account the differences between the vessel structures would be required in order to perform a better comparison between retinal and conjunctival vascular tortuosity values.

\section{Conclusion}

In this paper, we proposed a novel framework which encompass a CNN paired with a multiscale vessel enhancement method to extract vessels of specified thicknesses from retina and/or conjunctiva which can be applied in studies related to vascular tortuosity. Since the framework achieved $MCC$ values greater than 0.71 for the considered datasets in extracting the vessels of specified thicknesses, it can be concluded that the framework performs well in extracting both retinal and conjunctival vessels of different thicknesses. In addition, we also applied the framework to determine the association of the tortuosity of relatively thick vessels in retina and conjunctiva, with diabetes and found that the obtained tortuosity comparisons were in agreement with the existing literature, thus strengthening the applicability of the proposed framework in vascular tortuosity related studies.


\section{Acknowledgements}

We express our sincere gratitude to Dr. Mahen A. Wijesuriya, Dr. Chamari L. Warnapura and other staff members of the National Diabetes Centre, Rajagiriya, Sri Lanka for the immense support provided for the data collection procedure. Also, we thank Mr. Achintha Iroshan and Mr. Dulara de Zoysa for their support in refining this work.

\appendix[Relationship between Width $w$ and Scale $\sigma$]

The second order derivative of a Gaussian kernel at scale $\sigma$ creates a probe kernel as illustrated in Fig. \ref{theorem}.

\begin{figure}[!h]
\centerline{\includegraphics[width=\columnwidth]{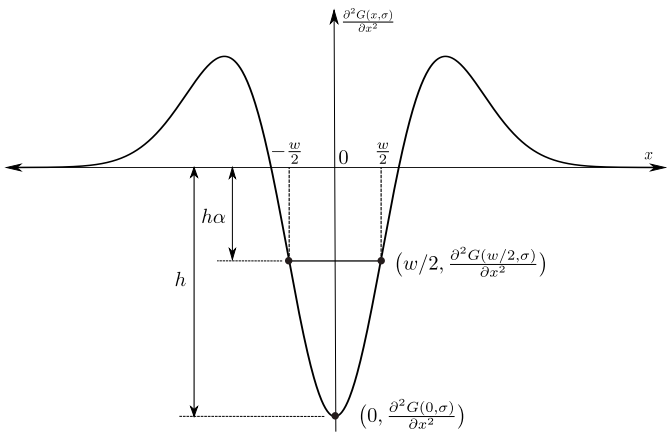}}
\caption{The second order derivative of Gaussian probe kernel}
\label{theorem}
\end{figure}

It is of interest to obtain an analytical relationship between the vessel thickness of interest $w$ and $\sigma$ when vessel thickness is matched with the middle lobe width of the probe kernel at the surround size ratio $\alpha$ ($0 \leq \alpha < 1)$. This analytical relationship is introduced in Theorem \ref{theorem:1}.

\begin{theorem}
    The relationship between the vessel width $w$ and the scale $\sigma$ of the second derivative of Gaussian kernel $\pdv[2]{G(x, \sigma)}{x}$, when 
    \begin{equation}
        \alpha \abs{\pdv[2]{G(0, \sigma)}{x}}  = \abs{\pdv[2]{G(w/2, \sigma)}{x}}, 0 \leq \alpha < 1
        \label{premise}
    \end{equation}
    
    is given by the following equation. 
    
    $$\sigma = \frac{w}{2} \sqrt{ 1 - 2W \bigg( \frac{1}{2} \exp \Big( \frac{1}{2} + \log(\alpha) \Big) \bigg)}^{-1}$$
    
    where $W$ is the Lambert-W function.
    \label{theorem:1}
\end{theorem}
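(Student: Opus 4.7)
The plan is to reduce the transcendental equation \eqref{premise} to the defining functional equation of the Lambert-W function via a clean change of variables. First I would write out the second derivative of the univariate Gaussian explicitly,
\[
\pdv[2]{G(x,\sigma)}{x} = \frac{x^{2}-\sigma^{2}}{\sigma^{4}}\,G(x,\sigma),
\]
and evaluate it at the two points of interest: at $x=0$ this gives $-\,G(0,\sigma)/\sigma^{2}$, and at $x=w/2$ it gives $\bigl((w/2)^{2}-\sigma^{2}\bigr)G(w/2,\sigma)/\sigma^{4}$. I would then observe that for the surround ratio condition to be meaningful one must be inside the central lobe of the probe kernel, i.e.\ $w/2<\sigma$, so both second derivatives are negative and the absolute value signs can be dropped consistently with a single minus sign on each side.

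Second, I would plug these evaluations into \eqref{premise} and cancel the normalisation $1/(\sqrt{2\pi}\,\sigma)$ from $G(0,\sigma)$ against one factor of $G(w/2,\sigma)=G(0,\sigma)\exp(-w^{2}/(8\sigma^{2}))$. After simplification the condition collapses to
\[
\alpha \;=\; \Bigl(1-\tfrac{w^{2}}{4\sigma^{2}}\Bigr)\exp\!\Bigl(-\tfrac{w^{2}}{8\sigma^{2}}\Bigr).
\]
This is the core algebraic identity; everything afterwards is a rearrangement to expose a Lambert-W form.

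Third, I would introduce the substitution $u=w^{2}/(8\sigma^{2})$ so the equation becomes $\alpha=(1-2u)e^{-u}$, and then set $t=(1-2u)/2$ so that $u=1/2-t$ and the equation reads $\alpha\,e^{1/2}=2t\,e^{t}$, equivalently $t\,e^{t}=\tfrac12\exp(\tfrac12+\log\alpha)$. By the defining property of the principal branch, $t=W\!\bigl(\tfrac12\exp(\tfrac12+\log\alpha)\bigr)$. Back-substituting yields $1-2u=2t$, hence
\[
\sigma^{2}\;=\;\frac{w^{2}}{8u}\;=\;\frac{w^{2}}{4\bigl(1-2W(\tfrac12 e^{1/2+\log\alpha})\bigr)},
\]
and taking the positive square root gives precisely the claimed formula for $\sigma$.

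The only genuinely delicate point is the branch/sign discussion: I must justify (i) that the premise forces $w/2<\sigma$ so that $1-2u>0$ and the real square root is well-defined, and (ii) that the argument $\tfrac12 e^{1/2+\log\alpha}=\tfrac{\alpha\sqrt{e}}{2}$ lies in $[0,\sqrt{e}/2)$ for $0\le\alpha<1$, which is safely inside the domain $[-1/e,\infty)$ where the principal branch $W$ is single-valued and real. Once these are checked, the routine calculus steps outlined above complete the proof.
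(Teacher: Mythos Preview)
Your proposal is correct and follows essentially the same route as the paper: substitute the explicit second derivative of the Gaussian into \eqref{premise}, cancel the normalisation, and reduce the resulting transcendental relation to a Lambert-$W$ form by a change of variable before back-substituting. The only cosmetic difference is that the paper quotes a general solution formula for equations of the shape $az+b\log z+c=0$, whereas you massage directly into the canonical $t e^{t}=\text{const}$ form; your added discussion of the sign condition $w/2<\sigma$ and the branch/domain check for $W$ is a welcome refinement that the paper omits.
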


\begin{proof}
Substituting $G(\bm{x}, \sigma) = \sqrt{2 \pi \sigma^2} \exp(-x^2/2\sigma^2)$ in to \eqref{premise}, we arrive at, 
\begin{equation}
    \exp(-w^2/2\sigma^2) \big( -w^2/\sigma^2 + 1 \big) = \alpha
    \label{first}
\end{equation}
Taking the natural logarithm in both sides of \eqref{first}, we get 
\begin{equation}
    \log(-2y + 1) = y + \log(\alpha)
    \label{second}
\end{equation}
where $y = w^2/\sigma^2$. By arranging the terms of \eqref{second}, we arrive at a $az + b\log(z) + c = 0$ type equation where $a = -1/2$, $b = -1$, $c = 1/2 + \log(1-\alpha)$ and $z = -2y + 1 = -2w^2/\sigma^2 + 1$. The general solution of $az + b\log(z) + c = 0$ is 
\begin{equation}
    z = (b/a) W \big( (a/b) \exp (-c/b) \big)
    \label{third}
\end{equation}
where $W$ is the Lambert-W function.

Substituting for $a$, $b$, $c$ and $z$ in \eqref{third} yields the following.
\begin{equation}
    \sigma = \frac{w}{2} \sqrt{ 1 - 2W \bigg( \frac{1}{2} \exp \Big( \frac{1}{2} + \log(\alpha) \Big) \bigg)}^{-1}
\end{equation}
\end{proof}

\bibliographystyle{IEEEtran}
\bibliography{IEEEfull}

\begin{thebibliography}{10}
\providecommand{\url}[1]{#1}
\csname url@samestyle\endcsname
\providecommand{\newblock}{\relax}
\providecommand{\bibinfo}[2]{#2}
\providecommand{\BIBentrySTDinterwordspacing}{\spaceskip=0pt\relax}
\providecommand{\BIBentryALTinterwordstretchfactor}{4}
\providecommand{\BIBentryALTinterwordspacing}{\spaceskip=\fontdimen2\font plus
\BIBentryALTinterwordstretchfactor\fontdimen3\font minus
  \fontdimen4\font\relax}
\providecommand{\BIBforeignlanguage}[2]{{%
\expandafter\ifx\csname l@#1\endcsname\relax
\typeout{** WARNING: IEEEtran.bst: No hyphenation pattern has been}%
\typeout{** loaded for the language `#1'. Using the pattern for}%
\typeout{** the default language instead.}%
\else
\language=\csname l@#1\endcsname
\fi
#2}}
\providecommand{\BIBdecl}{\relax}
\BIBdecl

\bibitem{diabretinopathy}
M.~B. Sasongko, T.~Y. Wong, T.~T. Nguyen, C.~Y. Cheung, J.~E. Shaw, and J.~J.
  Wang, ``Retinal vascular tortuosity in persons with diabetes and diabetic
  retinopathy,'' \emph{Diabetologia}, vol.~54, pp. 2409--2416, 2011.

\bibitem{twistedbloodvessel}
H.~C. Han, ``Twisted blood vessels: Symptoms, etiology and biomechanical
  mechanisms,'' \emph{Journal of Vascular Research}, vol.~49, pp. 185--197,
  2012.

\bibitem{Owen2011RetinalAT}
C.~G. Owen, A.~R. Rudnicka, C.~M. Nightingale, R.~Mullen, S.~A. Barman,
  N.~Sattar, D.~G. Cook, and P.~H. Whincup, ``Retinal arteriolar tortuosity and
  cardiovascular risk factors in a multi-ethnic population study of 10-year-old
  children; the child heart and health study in england (chase).''
  \emph{Arteriosclerosis, thrombosis, and vascular biology}, vol. 31 8, pp.
  1933--8, 2011.

\bibitem{khansari2019relationship}
M.~M. Khansari, S.~L. Garvey, S.~Farzad, Y.~Shi, and M.~Shahidi, ``Relationship
  between retinal vessel tortuosity and oxygenation in sickle cell
  retinopathy,'' \emph{International Journal of Retina and Vitreous}, vol.~5,
  no.~1, p.~47, 2019.

\bibitem{yasuda2015significant}
S.~Yasuda, S.~Kachi, M.~Kondo, S.~Ueno, H.~Kaneko, and H.~Terasaki,
  ``Significant correlation between retinal venous tortuosity and aqueous
  vascular endothelial growth factor concentration in eyes with central retinal
  vein occlusion,'' \emph{PloS one}, vol.~10, no.~7, 2015.

\bibitem{OpthalmicArtery}
T.~Akagi, A.~Uji, A.~S. Huang, R.~N. Weinreb, T.~Yamada, M.~Miyata, T.~Kameda,
  H.~O. Ikeda, and A.~Tsujikawa, ``Conjunctival and intrascleral vasculatures
  assessed using anterior segment optical coherence tomography angiography in
  normal eyes,'' \emph{American Journal of Ophthalmology}, vol. 196, pp. 1 --
  9, 2018.

\bibitem{Achintha}
K.~A. Iroshan, A.~D. N.~D. Zoysa, C.~L. Warnapura, M.~A. Wijesuriya,
  S.~Jayasinghe, N.~D. Nanayakkara, and A.~C.~D. Silva, ``Detection of diabetes
  by macrovascular tortuosity of superior bulbar conjunctiva,'' in \emph{2018
  40th Annual International Conference of the IEEE Engineering in Medicine and
  Biology Society (EMBC)}, July 2018, pp. 1--4.

\bibitem{scleratotuosity}
C.~G. Owen, R.~S. Newsom, A.~R. Rudnicka, S.~A. Barman, E.~G. Woodward, and
  T.~J. Ellis, ``Diabetes and the tortuosity of vessels of the bulbar
  conjunctiva,'' \emph{Ophthalmology}, vol. 115, no.~6, pp. e27-- e32, 2008.

\bibitem{sodi2019quantitative}
A.~Sodi, C.~Lenzetti, D.~Bacherini, L.~Finocchio, T.~Verdina, I.~Borg,
  F.~Cipollini, F.~U. Patwary, I.~Tanini, C.~Zoppetti \emph{et~al.},
  ``Quantitative analysis of conjunctival and retinal vessels in fabry
  disease,'' \emph{Journal of ophthalmology}, vol. 2019, 2019.

\bibitem{BCOSFIRE}
G.~Azzopardi, N.~Strisciuglio, M.~Vento, and N.~Petkov, ``Trainable cosfire
  filters for vessel delineation with application to retinal images,''
  \emph{Medical Image Analysis}, vol.~19, no.~6, pp. 46--57, 01 2015.

\bibitem{Morphology}
K.~BahadarKhan, A.~A~Khaliq, and M.~Shahid, ``A morphological hessian based
  approach for retinal blood vessels segmentation and denoising using region
  based otsu thresholding,'' \emph{PLOS ONE}, vol.~11, no.~7, pp. 1--19, 07
  2016.

\bibitem{Unet}
O.~Ronneberger, P.~Fischer, and T.~Brox, ``U-net: Convolutional networks for
  biomedical image segmentation,'' in \emph{Medical Image Computing and
  Computer-Assisted Intervention -- MICCAI 2015}, 2015, pp. 234--241.

\bibitem{DUnet}
Q.~Jin, Z.~Meng, T.~D. Pham, Q.~Chen, L.~Wei, and R.~Su, ``Dunet: A deformable
  network for retinal vessel segmentation,'' \emph{Knowledge-Based Systems},
  vol. 178, pp. 149--162, 2019.

\bibitem{iternet}
L.~Li, M.~Verma, Y.~Nakashima, H.~Nagahara, and R.~Kawasaki, ``Iternet: Retinal
  image segmentation utilizing structural redundancy in vessel networks,'' in
  \emph{The IEEE Winter Conference on Applications of Computer Vision}, 2020,
  pp. 3656--3665.

\bibitem{r2unet}
M.~Z. Alom, C.~Yakopcic, M.~Hasan, T.~M. Taha, and V.~K. Asari, ``Recurrent
  residual u-net for medical image segmentation,'' \emph{Journal of Medical
  Imaging}, vol.~6, no.~1, p. 014006, 2019.

\bibitem{lightweightattention}
X.~Li, Y.~Jiang, M.~Li, and S.~Yin, ``Lightweight attention convolutional
  neural network for retinal vessel segmentation,'' \emph{IEEE Transactions on
  Industrial Informatics}, 2020.

\bibitem{ScleraVesselSeg}
W.~{Dong}, H.~{Zhou}, and D.~{Xu}, ``A new sclera segmentation and vessels
  extraction method for sclera recognition,'' in \emph{2018 10th International
  Conference on Communication Software and Networks (ICCSN)}, 2018, pp.
  552--556.

\bibitem{Dulara}
D.~{De Zoysa}, A.~{Kondarage}, C.~{Warnapura}, M.~{Wijesuriya},
  S.~{Jayasinghe}, N.~{Nanayakkara}, and A.~{De Silva}, ``Eccentricity based
  quantification of retinal vascular tortuosity for early detection of diabetes
  and diabetic retinopathy,'' in \emph{2018 IEEE International Conference on
  Systems, Man, and Cybernetics (SMC)}, Oct 2018, pp. 3280--3284.

\bibitem{VesselCaliber}
E.~Trucco, H.~Azegrouz, and B.~Dhillon, ``Modeling the tortuosity of retinal
  vessels: Does caliber play a role?'' \emph{IEEE transactions on bio-medical
  engineering}, vol.~57, pp. 2239--47, 09 2010.

\bibitem{sharma2017studies}
R.~Sharma, V.~Gurunadh, and S.~Shankar, ``Studies on conjunctival vessel
  morphology in diabetic patients: a short review,'' \emph{Journal of
  Biomedical Research \& Environmental Sciences}, vol.~3, no.~7, pp. 016--019,
  2017.

\bibitem{ramos2019computational}
L.~Ramos, J.~Novo, J.~Rouco, S.~Romeo, M.~{\'A}lvarez, and M.~Ortega,
  ``computational assessment of the retinal vascular tortuosity integrating
  domain-related information,'' \emph{Scientific Reports}, vol.~9, no.~1, pp.
  1--12, 2019.

\bibitem{frangi}
R.~Frangi, W.~Niessen, K.~Vincken, and M.~Viergever, ``Multiscale vessel
  enhancement filtering,'' \emph{Med. Image Comput. Comput. Assist. Interv.},
  vol. 1496, 02 2000.

\bibitem{sato19973d}
Y.~Sato, S.~Nakajima, H.~Atsumi, T.~Koller, G.~Gerig, S.~Yoshida, and
  R.~Kikinis, ``3d multi-scale line filter for segmentation and visualization
  of curvilinear structures in medical images,'' in
  \emph{CVRMed-MRCAS'97}.\hskip 1em plus 0.5em minus 0.4em\relax Springer,
  1997, pp. 213--222.

\bibitem{steger1998}
C.~Steger, ``An unbiased detector of curvilinear structures,'' \emph{IEEE
  Transactions on pattern analysis and machine intelligence}, vol.~20, no.~2,
  pp. 113--125, 1998.

\bibitem{adam}
D.~Kingma and J.~Ba, ``Adam: A method for stochastic optimization,''
  \emph{International Conference on Learning Representations}, 12 2014.

\bibitem{drechsler2010comparison}
K.~Drechsler and C.~O. Laura, ``Comparison of vesselness functions for
  multiscale analysis of the liver vasculature,'' in \emph{Proceedings of the
  10th IEEE International Conference on Information Technology and Applications
  in Biomedicine}.\hskip 1em plus 0.5em minus 0.4em\relax IEEE, 2010, pp. 1--5.

\bibitem{clahe}
A.~M. Reza, ``Realization of the contrast limited adaptive histogram
  equalization (clahe) for real-time image enhancement,'' \emph{Journal of VLSI
  signal processing systems for signal, image and video technology}, vol.~38,
  no.~1, pp. 35--44, 2004.

\bibitem{jianzhuang1991automatic}
L.~Jianzhuang, L.~Wenqing, and T.~Yupeng, ``Automatic thresholding of
  gray-level pictures using two-dimension otsu method,'' in \emph{China., 1991
  International Conference on Circuits and Systems}.\hskip 1em plus 0.5em minus
  0.4em\relax IEEE, 1991, pp. 325--327.

\bibitem{tortuosityReview}
M.~{Abdalla}, A.~{Hunter}, and B.~{Al-Diri}, ``Quantifying retinal blood
  vessels' tortuosity-review,'' in \emph{2015 Science and Information
  Conference(SAI)}, 2015, pp. 687--693.

\bibitem{SBVPI1}
P.~Rot, M.~Vitek, K.~Grm, {\v{Z}}.~Emer{\v{s}}i{\v{c}}, P.~Peer, and
  V.~{\v{S}}truc, ``Deep sclera segmentation and recognition,'' in
  \emph{Handbook of Vascular Biometrics}.\hskip 1em plus 0.5em minus
  0.4em\relax Springer, 2020, pp. 395--432.

\bibitem{SBVPI2}
P.~Rot, {\v{Z}}.~Emer{\v{s}}i{\v{c}}, V.~Struc, and P.~Peer, ``Deep multi-class
  eye segmentation for ocular biometrics,'' in \emph{2018 IEEE International
  Work Conference on Bioinspired Intelligence (IWOBI)}.\hskip 1em plus 0.5em
  minus 0.4em\relax IEEE, 2018, pp. 1--8.

\bibitem{DRIVE}
J.~{Staal}, M.~D. {Abramoff}, M.~{Niemeijer}, M.~A. {Viergever}, and B.~{van
  Ginneken}, ``Ridge-based vessel segmentation in color images of the retina,''
  \emph{IEEE Transactions on Medical Imaging}, vol.~23, no.~4, pp. 501--509,
  2004.

\bibitem{chicco2020advantages}
D.~Chicco and G.~Jurman, ``The advantages of the matthews correlation
  coefficient (mcc) over f1 score and accuracy in binary classification
  evaluation,'' \emph{BMC genomics}, vol.~21, no.~1, p.~6, 2020.

\bibitem{TCAL}
E.~Grisan, M.~Foracchia, and A.~Ruggeri, ``A novel method for the automatic
  evaluation of retinal vessel tortuosity,'' in \emph{Proceedings of the 25th
  Annual International Conference of the IEEE Engineering in Medicine and
  Biology Society (IEEE Cat. No. 03CH37439)}, vol.~1.\hskip 1em plus 0.5em
  minus 0.4em\relax IEEE, 2003, pp. 866--869.

\end{thebibliography}

\end{document}